\newtheorem{thm}{Theorem}
\newtheorem{assumption}{Assumption}
\theoremstyle{remark}
\newcommand{\removelatexerror}{\let\@latex@error\@gobble}
\newenvironment{definition}[1][Definition]{\begin{trivlist}
		\item[\hskip \labelsep {\bfseries #1}]}{\end{trivlist}}
\newenvironment{problem}[1][Problem]{\begin{trivlist}
		\item[\hskip \labelsep {\bfseries #1}]}{\end{trivlist}}
\tikzset{fontscale/.style = {font=\relsize{#1}}
}
\title{\LARGE \bf Permissive Supervisor Synthesis for Markov Decision Processes through Learning}
\author{Bo~Wu, Xiaobin~Zhang and Hai~Lin
	\thanks{This work is supported by NSF-CNS-1239222, NSF-EECS-1253488 and NSF-CNS-1446288}
	\thanks{ Bo Wu, Xiaobin Zhang and Hai Lin are with the Department of Electrical Engineering, University of Notre Dame, Notre Dame,
		IN, 46556 USA. {\tt\small bwu3@nd.edu, xzhang11@nd.edu, hlin1@nd.edu}}}
\begin{document}
	\maketitle
	\begin{abstract}
This paper considers the permissive supervisor synthesis for probabilistic systems modeled as Markov Decision Processes (MDP). Such systems are prevalent in power grids, transportation networks, communication networks  and robotics. Unlike centralized planning and optimization based planning, we propose a novel supervisor synthesis framework based on learning and compositional model checking to generate permissive local supervisors in a distributed manner. With the recent advance in assume-guarantee reasoning verification for probabilistic systems, building the composed system can be avoided to alleviate the state space explosion and our framework learn the supervisors iteratively based on the counterexamples from verification. Our approach is guaranteed to terminate in finite steps and to be correct.
	\end{abstract}
\begin{IEEEkeywords}
	formal methods, supervisor synthesis, model checking, automata learning
\end{IEEEkeywords}
	
	\section{Introduction}
	\indent Control and planning of probabilistic systems modeled as Markov Decision Processes (MDP) \cite{rutten2004mathematical} has gained tremendous popularity in recent years because it not only considers the uncertainties in actuation and environments for each individual agent but also the coordination among them for achieving a global specification. The development of a group of several specialized agents could offer better efficiency, fault-tolerance, and flexibility due
	to its distributed operations, parallelized executions and reconfigurability in face of changes.\\
	\indent Since multiple MDPs can be composed into one, a centralized planning approach is the probabilistic model checking \cite{baier2008principles} to obtain the global optimal strategy (policy) for single MDP such as in \cite{kuvcera2005controller,baier2004controller} with specifications in probabilistic linear temporal logic (PLTL) \cite{lahijanian2010motion} and computation tree logic (PCTL) \cite{hansson1994logic}. However, a global controller supervising all the agents may not be feasible. Another planning approach is based on the Decentralized MDP (Dec-MDPs) \cite{bernstein2002complexity} where each agent make decisions with its local information. Most existing results focus on seeking the (approximately) optimal action policies through reinforcement learning or Q-learning \cite{oliehoek2008optimal}, or alternatively converting the problem into stochastic games (SG) \cite{filar2012competitive} but such solutions can be computationally expensive and often intractable if not undecidable.\\
	\indent This paper aims to develop a formal synthesis framework with MDPs so that a certain desired performance can be guaranteed with the designed local supervisors. Without loss of optimality in the bounded time properties that we consider \cite{forejt2011automated}, the supervisors are in the form of the deterministic finite automata (DFA) \cite{cassandras2008introduction} that regulate each agent's action. Unlike the optimization based approach which searches for a single optimized control policy, we are looking for \emph{permissive} local supervisors that potentially enable multiple choices in each step so that the system is resilient against unexpected changes in the runtime or additional constraints may be imposed on the system. Similar ideas was studied in \cite{ujma2015permissive} where memoryless (making decision based only on current state) permissive strategies were generated incrementally using mixed integer linear programming (MILP) \cite{schrijver1998theory} for the single agent. The agent and its environment were modeled as turn based stochastic two-player games. In contrast, our approach considers strategies with memory and works in a decremental fashion by iteratively eliminating counterexample strategies from model checking and refining the local supervisor with L* learning algorithm \cite{angluin1987learning}.\\
	\indent In recent years, counterexample guided approaches have been extensively studied in the aspect of abstraction refinement \cite{hermanns2008probabilistic,clarke2000counterexample} and synthesis \cite{lin2011counterexample,dai2014automatic,raman2015reactive}. In this paper, we mainly focus on the latter. The main purpose of the counterexample guided synthesis is to iteratively learn a correct supervisor under whose regulation the system is guaranteed to satisfy a given specification. In each intermediate iteration,  a guessed supervisor is automatically generated whose correctness will be verified by the model checker and other oracles and refined based on the counterexamples from verification. Such verification-refinement loop ends when no counterexample is generated and thus the guessed supervisor is verified to be correct. \\
	\indent Synthesis of multiple agents is not straightforward for systems modeled as MDPs. Unlike the centralized probabilistic model checking approach, we propose to apply the existing assume guarantee reasoning (AGR) framework for the probabilistic systems \cite{hermanns2008probabilistic,komuravelli2012assume,he2016learning} to alleviate the computation burden. While our framework does not require a particular assume-guarantee reasoning algorithm, we can choose any sound and complete assume-guarantee reasoning algorithms and potentially benefit from state space reduction during the model checking process. The second challenge is to process the counterexamples returned from the compositional model checking. To this end, we propose to label the actions such that we can identify which individual agent causes the specification violation. L* learning is adopted to each single agent based on the counterexample it receives to iteratively refine the corresponding local supervisor. The correctness and termination of our design procedure are proved.  \\
	\indent This paper merges and further develops our previous results \cite{wu2015counterexample,wu2016counterexample}. While the same  L* learning algorithm \cite{angluin1987learning} was considered in synthesis process, we changed the form of the supervisor and considered positive counterexamples so that the results can become less conservative. We also changed the counterexample selection scheme to potentially lower the complexity. Furthermore, we developed the supervisor synthesis software in C++ using counterexample generation tool COMICS \cite{jansen2012comics} and L* learning library libalf \cite{bollig2010libalf}.\\
	\indent This paper is divided into five parts. We give the necessary preliminaries in Section \ref{sec:Background} followed by problem formulation in Section \ref{sec:problem_formulation}. We describe our proposed supervisor synthesis framework in Section \ref{sec:CGSSF}. Section \ref{sec:conclusion} concludes the paper.

\section{Background}
\label{sec:Background}
This section introduces the required background knowledge about MDP, model checking \cite{baier2008principles}, and the L* algorithm. 
\subsection{Markov Decision Process}
\begin{definition}
	An MDP is a tuple $\mathcal{M}=(S,\hat{s},A,T,L)$ where
	\begin{itemize}
		\item $S=\{s_0,s_1,...\}$ is a finite set of states;
		\item $\hat{s}\in S$ is the initial state;
		\item $A$ is a finite set of actions;
		\item $T(s,a,s'):=Pr(s'|s,a),~\forall i\geq0$.
		\item $L:S\rightarrow2^{AP}$ is the labeling function that maps each $s\in S$ to one or several elements of a set $AP$ of atomic propositions.
	\end{itemize}
\end{definition}
%
%
%
%
%
%
%
%
%
%
%
%
%
\indent For each state $s\in S$, we denote $A(s)$ as the set of available actions. From the definition it is not hard to see that the Discrete Time Markov Chain (DTMC) is a special case of MDP with $|A(s)|=1$ for all $s\in S$, where $|A(s)|$ is the cardinality of the set $A(s)$. \\
\indent A path $\omega$ of an MDP is a non-empty sequence of the form $\omega=s_0\xrightarrow{a_0}s_1\xrightarrow{a_1}s_2...s_i\xrightarrow{a_i}s_{i+1}...$, where each transition is enabled by an action $a_i$ such that $T(s_i,a_i,s_{i+1})>0$. We denote $Path^{fin}_s$ as the collection of finite length paths that start in a state $s$. We also denote $\omega^a$ as the action path that's embedded in $\omega$.  The nondeterminism of an MDP is resolved with the help of a scheduler.
\begin{definition}
	A scheduler $\sigma$ (also known as adversary or policy) of an MDP $\mathcal{M}$ is a function mapping every finite path $\omega_{fin}$ onto an action $a\in A(last(\omega_{fin}))$ where $last(\omega_{fin})$ denotes the last state of $\omega_{fin}$.
\end{definition}

The scheduler $\sigma$ specifies the next action for each finite path. The behavior of an MDP $\mathcal{M}$ under a given scheduler $\sigma$ is purely probabilistic and thus reduces to a DTMC $\mathcal{M}_\sigma$. We denote $\Sigma_{\mathcal{M}}$ as the (possibly infinite) set of all possible schedulers for $\mathcal{M}$. There is a one-to-one correspondence between the paths of the DTMC and that of the MDP.
%
%
%
%
%
%
%
%
%
%
%
%
%

\subsection{Probabilistic Model Checking}
With the MDP model defined above, we can then determine if a given MDP satisfies some specification $\phi$. For the specification we use the Probabilistic Computation Tree Logic (PCTL) \cite{rutten2004mathematical} that can be seen as a probabilistic extension of Computation Tree Logic (CTL). The PCTL formulas we consider in this paper are the time-bounded weak-safety fragment \cite{chadha2010counterexample}. 
\begin{definition}
	The syntax of a weak-safety PCTL is defined as\\
	$\phi::=true~|~a~|~\phi\vee\wedge\phi~|~\mathcal{P}_{\leq p}[X\psi]~|~\mathcal{P}_{\leq p}[\psi~\mathcal{U}^{\leq k}~\psi]$;\\
	$\psi::=true~|~a~|~\psi\vee\wedge\psi~|~\neg(\mathcal{P}_{\leq p}[X\psi])~|~\neg(\mathcal{P}_{\leq p}[\psi~\mathcal{U}^{\leq k}~\psi])$,\\
	where $\phi$ is the weak safety fragment and $\psi$ is the strict liveness fragment, $a\in AP$, $p\in[0,1]$, $X$ for "next", $\mathcal{U}^{\leq k}$ for "bounded until" with $k\in\mathbb{N}$.
\end{definition}
Satisfaction of a PCTL formula $\mathcal{P}_{\leq p}(\varphi)$ means that the probability of satisfying $\varphi$ fulfills the comparison $\leq p$ \emph{over all possible schedulers}. 
The purpose of the probabilistic model checking is to give a Boolean answer to $\mathcal{M}\models\mathcal{P}_{\leq p}(\varphi)$.
\subsection{L* learning algorithm}
The L* learning algorithm  \cite{angluin1987learning} is one of widely used online learning algorithms for regular languages. It basically learns a \emph{minimal} Deterministic Finite Automata (DFA) that accepts an unknown regular language $\mathcal{L}$ by interacting with a \emph{teacher}. The definitions of DFA and language are described below.
\begin{definition}
	A \emph{finite automaton} (FA) is a tuple $\mathcal{A}=(Q,\Sigma,\delta,Q_0,F)$ where $Q$ is a finite set of states, $\Sigma$ is a finite alphabet of actions, $\delta:Q\times \Sigma\rightarrow 2^Q$ is a transition function, $Q_0\subseteq Q$ is a set of initial states and $F\subseteq Q$ is a set of \emph{accepting sates}. $\mathcal{A}$ is called \emph{deterministic} (DFA) if $|Q_0|\leq 1$ and $|\delta(q,a)|\leq 1$ for all states $q\in Q$ and $a\in\Sigma$.
\end{definition}
\begin{definition}
	A word $s=a_1a_2...a_n$ where $a_i\in\Sigma$ is accepted by an FA $\mathcal{A}=(Q,\Sigma,\delta,Q_0,F)$ if it induces a run $q_0q_1,...,q_n$ on FA where $q_i\in Q$ and $q_n\in F$. The collection of all finite words $S\in\Sigma^*$ accepted by $\mathcal{A}$ is called the language accepted by $\mathcal{A}$ denoted as $\mathcal{L}(\mathcal{A})$.
\end{definition}
\indent The teacher answers two types of questions, namely the \emph{membership} query and the \emph{conjecture}. The membership query returns true if some words belong to the target language and false otherwise. For the conjecture, if the hypothesized DFA accepts the target language, it will return true and the learning process is finished. Otherwise, the teacher must return a counterexample to illustrate the difference between the conjectured DFA and the unknown plant. 

During the learning process, the L* algorithm maintains a so-called \emph{observation table} $(S,E,T)$ where $S\subseteq\Sigma^*$ is a set of prefixes, $E\subseteq\Sigma^*$ is a set of suffixes and $T:(S\cup S\cdot\Sigma)\times E\rightarrow\{0,1\} $. $\Sigma^*$ denotes finite traces from the alphabet set $\Sigma$ 
For $s\in S\cup S\cdot\Sigma $, $e\in E$, if $s\cdot e\in\mathcal{L}$, then $T(s,e)=1$ and $T(s,e)=0$ if $s\cdot e\notin \mathcal{L}$. The L* algorithm will always keep the observation table \emph{closed and consistent} as defined in \cite{angluin1987learning}. It is guaranteed to learn a minimal DFA with $(|\Sigma|n^2 + n \log m)$ membership queries and
at most $n-1$ conjectures, where $n$ is the number
of states in the final DFA and $m$ is the length of the
longest counterexample in conjectures \cite{angluin1987learning}. Due to space limit we will not elaborate the details here.

	\section{Problem formulation}
	\label{sec:problem_formulation}
	Our target system model is an MAS consists of $N$ agents modeled as MDPs $\mathcal{M}_i,i\in[1,N]$. Define $\mathcal{M}$ as the composition of $N$ subsystems such that $\mathcal{M}=\mathcal{M}_1||\mathcal{M}_2||...||\mathcal{M}_N$. 
	
	The parallel composition of MDPs are defined as below:
	\begin{definition}
		Given two Markov decision processes $\mathcal{M}_1=(S_1,s^1_0,A_1,T_1,L_1)$ and $ \mathcal{M}_2=(S_2,s^2_0,A_2,$ $T_2,L_2)$, the parallel composition of $\mathcal{M}_1$ and $\mathcal{M}_2$ is an MDP $\mathcal{M}=\mathcal{M}_1||\mathcal{M}_2=(S_1\times S_2,s^1_0\times s^2_0,A_1\cup A_2,T,L)$ where $L(s_1,s_2)=L_1(s_1)\cup L_2(s_2)$ and  
		\begin{itemize}
			\item	$T((s_1,s_2),a,(s_1',s_2'))=T_1(s_1,a,s_1')T_2(s_2,a,s_2')$ if $a\in A_1\cap A_2$ and both $T_1(s_1,a,s_1')$ and $T_2(s_2,a,s_2')$ are defined, or
			\item $T((s_1,s_2),a,(s_1',s_2))=T_1(s_1,a,s_1')$ if $a\in A_1\backslash A_2$ and $T_1(s_1,a,s_1')$ is defined, or
			\item $T((s_1,s_2),a,(s_1,s_2'))=T_2(s_2,a,s_2')$ if $a\in A_2\backslash A_1$ and $T_2(s_2,a,s_2')$ is defined.
		\end{itemize}
	\end{definition}
	
	The probabilistic specification is given as $\phi=\mathcal{P}_{\leq p}(\varphi)$ where $\varphi=\phi_1\mathcal{U}^{\leq k}\phi_2$. It is a time-bounded weak-safety PCTL to guarantee certain properties. The reason why we focus on timed bounded formulas is that for many practical systems, because of power constraint and practical limitation like memory, the observation duration is often of limited time. Our aim is to determine if $\mathcal{M}\models\phi$. If not, we need to synthesize local supervisors $\mathcal{K}_i,i\in[1,N]$ that dynamically disable certain actions $a_i\in A_i$ so that the controlled system $\mathcal{M}$ can satisfy the specification $\phi$. 
	
	In particular, the supervisors $\mathcal{K}_i,i\in[1,N]$ is the form of DFAs with the alphabet set being given as $\Sigma_i=S_i\times A_i$. we define another notion of parallel composition $||_{sup}$ to illustrate how the supervisors regulate $\mathcal{M}$.
	
	\begin{definition}
		Given an MDP $\mathcal{M}=(S,s_0,A,T,L)$, and  the DFA $\mathcal{K}=\{Q,\Sigma,\delta,q_0,Q_m\}$, the parallel composition between $\mathcal{M}$ and $\mathcal{K}$ is an MDP $\mathcal{P}=\mathcal{M}||_{sup}\mathcal{K}$:
		\begin{itemize}
			\item $S^{\mathcal{P}}=\{(s,q)|s\in S,~q\in Q\}$ is a finite set of states;
			\item $(s_0,q_0)$ is the initial state;
			\item $A^{\mathcal{P}}=A$ is a finite set of actions;
			\item $T^{\mathcal{P}}((s,q),a,(s',q')):=T(s,a,s')$, if $\delta(q,sa)=q'$;
		\end{itemize}
	\end{definition} 
	
	There is a one-to-one correspondence between a path from $\mathcal{P}$, $\omega'=(s_0,q_0)\xrightarrow{a_0}(s_1,q_1)\xrightarrow{a_1}(s_2,q_2)...(s_i,q_i)\xrightarrow{a_i}(s_{i+1},q_{i+1})$ to the path in the original MDP $\mathcal{M}$, $\omega=s_0\xrightarrow{a_0}s_1\xrightarrow{a_1}s_2...s_i\xrightarrow{a_i}s_{i+1}$. From the definition of $||_{sup}$, only those transitions that are defined on both $\mathcal{M}$ and $\mathcal{K}$ are enabled, therefore the supervisor controls what actions can be allowed on each state. We assume that any $a_i\in A_i,\forall i\in[1,N]$ is controllable meaning that we could disable or enable it if we want. Note that in our previous results \cite{wu2015counterexample,wu2016counterexample},  the local supervisor's alphabet $\Sigma_i=A_i$. From the definition of the parallel composition $||$, it can be seen that it only keeps track of the action sequence but not state sequence. So our previous supervisor eliminates all the path with the same action sequence while in this paper, the supervisor keeps track of the state as well and thus is less conservative.
	
	Formally, our permissive supervisor synthesis problem can be formulated as follows:
	\begin{problem}{\textbf{1}}
		Given an MAS consisting of $N$ system models $\mathcal{M}_i=(S_i,s_{0,i},A_i,T_i,$ $L_i),~i=1,...,N$ and a probabilistic specification $\phi$ represented by PCTL such that $\mathcal{M}\not\models\phi$, where $\mathcal{M}=\mathcal{M}_1||\mathcal{M}_2||...||\mathcal{M}_N$. The supervisor synthesis problem is to automatically synthesize $N$ local supervisors $\mathcal{K}_i$ such that $\widetilde{\mathcal{M}}\models\phi$, where $\widetilde{\mathcal{M}}=\widetilde{\mathcal{M}_1}||\widetilde{\mathcal{M}_2}||...||\widetilde{\mathcal{M}_N}$ and $\widetilde{\mathcal{M}_i}=\mathcal{M}_i||_{sup}\mathcal{K}_i,~i=1,...,N$.
	\end{problem}
\section{Counterexample-guided supervisor synthesis framework}	\label{sec:CGSSF}
In this section, we will introduce the proposed counterexample guided supervisor synthesis framework. But before that, we need to first introduce the counterexamples in an MDP model. 
\subsection{Counterexamples}
\indent Counterexamples are one of the most important features in model checking which illustrates how one model violates certain property. They are the key ingredients in counterexample guided synthesis \cite{lin2011counterexample} and abstraction refinement \cite{clarke2003counterexample}. The form of counterexample varies by the checked formula and the given model. In non-probabilistic systems, for example safety properties like \emph{bad things should never happen} can be simply refuted by a single path that leads to a bad state. However, in probabilistic models, counterexample generation is non-trivial and is an active research area. The main difficulty is that, while one single path may be valid by obeying the probability bound under given scheduler, a collection of the finite paths may violate the probabilistic specification simply because their accumulative probability exceeds the desired bound.

For MDP models, since only with given scheduler can the MDP have probability measure, our first step is to find out the scheduler under which the specification $\phi$ is violated. In this case the MDP will be reduced to a DTMC. Then several results exist to generate counterexample in DTMC, for example, $k$ shortest path algorithm \cite{abraham2014counterexample} or minimal critical subsystem generation from \cite{wimmer2012minimal}.

\subsection{Single Agent Case}

A summary of our proposed supervisor synthesis framework is illustrated in Algorithm \ref{alg:supersyn}. Given an MDP model $\mathcal{M}$ and a probabilistic specification $\phi$, 
there are two tasks to be done in each iteration where $\mathcal{K}^i$ and $\widetilde{\mathcal{M}}^i$ are the resulting supervisor and the composed system respectively after the $i$-th iteration. 
\begin{figure}[!t]
	\removelatexerror
	\begin{algorithm}[H]
		\SetKwData{Left}{left}\SetKwData{This}{this}\SetKwData{Up}{up}
		\SetKwFunction{Union}{Union}\SetKwFunction{FindCompress}{FindCompress}
		\SetKwInOut{Input}{input}\SetKwInOut{Output}{output}
		\Input{An MDP model $\mathcal{M}$, probabilistic specification $\phi=\mathcal{P}_{\leq p}(\varphi)$}
		\Output{$\mathcal{K}^i$ such that $\widetilde{\mathcal{M}}^i=\mathcal{M}||_{sup}\mathcal{K}^i\models\phi$ or false if there does not exist such supervisor. }
		\BlankLine
		\nl $i\leftarrow 0$, $\mathcal{K}^i=$ DFA $\mathcal{G}$ such that
		$\mathcal{L}(\mathcal{G})=\Sigma^*$, $\widetilde{\mathcal{M}^i}=\mathcal{M}$;\
		
		\nl \While{true} {
			\nl \uIf{ModelChecking($\widetilde{\mathcal{M}^i},\phi$)=false}{
				\nl \If{i=0 \& $P_{min}(\varphi)>p$}{\Return false;}
				
				\nl ${\widetilde{\mathcal{M}}_{\sigma_i}}\leftarrow$ the counterexample DTMC; \
				
				\nl $\pi^i\leftarrow$ HSP(${\widetilde{\mathcal{M}}_{\sigma_i}},\phi$);

			}
			\uElseIf{Positive counterexample exists}{
				\nl $\pi^i\leftarrow$ Positive counterexample; \
			}
			
			\uElse{Break;}
			\nl $\mathcal{K}^{i+1}\leftarrow$ L* ($\pi^i$), $\widetilde{\mathcal{M}}^{i+1}=\mathcal{M}||_{sup}\mathcal{K}^{i+1}$\;
			
			\nl $i\leftarrow i+1$;
		}
		\nl \Return $\mathcal{K}^i$\;
		
		\caption{SPVSYN($\mathcal{M},\phi$)}\label{alg:supersyn}		
	\end{algorithm}
\end{figure}
%
%
%
%
%
%
%
%

\noindent i) Verification as shown from line 3 to line 7 of Algorithm \ref{alg:supersyn}. In the first model checking ($i=0$), if $P_{min}<=p$, we can extract the strategy $\sigma_{min}$ as a DFA for later use.  Given $\widetilde{\mathcal{M}}^i$ and the probabilistic property $\phi=\mathcal{P}_{\leq p}(\varphi)$ where $\varphi=\phi_1U^{\leq t}\phi_2$, this phase first checks whether $\widetilde{\mathcal{M}}^i \models \phi$. If not, a DTMC $T^i$ induced from a scheduler will be returned as the counterexample to select the counterexample path $\pi^i$. Else if there is positive counterexample, it will be saved to $\pi^i$. If no counterexample in both cases, then we are done.

\noindent ii) Learning based synthesis. A synthesis-specific implementation of the L* learning algorithm is proposed here. In this stage, to be able to answer the membership query, we will view $\mathcal{M}$ as a finite automaton (FA) by ignoring its probabilistic attributes and leaving only its states, transition relation and labeling. Formally, we will reduce $\mathcal{M}=(Q,q_0,A,Steps,L)$ into a labeled FA $\mathcal{M}_D=(Q,A,\delta,q_0,F,L)$ where for all $q,q'\in Q$ and $a\in A$, $\delta(q,a)=q'$ if and only if there is an action-distribution pair $(a,\mu)\in Steps(q)$ and $\mu(q')>0$. The accepting states $F=Q$. Since we only consider a bounded time $k$, any path longer than $k$ doesn't matter to the supervisor and we  answer them with $1$.

\indent We then describe the learning process in more detail. It makes use of the L* learning as a template algorithm to learn the unknown supervisor $\mathcal{K}$. The teacher to membership query will be $\mathcal{M}_D$ which tells if some finite state action path belongs to it or not. After some membership queries, $\mathcal{K}^i$ will be generated as in line 7 in Algorithm \ref{alg:supersyn}. The conjecture query is answered by probabilistic model checking as shown in line 3 to line 6 in Algorithm \ref{alg:supersyn} for negative counterexamples (those should be disallowed but currently allowed by the supervisor) and $\mathcal{M}_D$ together with the $\sigma_{min}$ for positive counterexamples (those should be allowed but currently absent in the supervisor). If the probabilistic model checking returns false, it will return DTMC $T^i$ as in line 5 and a state action path $\pi^i$ will be selected as the counterexample for the conjecture as in line 6. If positive counterexample exists, it will be passed to L* learning as in line 7. In line 8, the L* learning algorithm will update its observation table accordingly and run another round of membership query to conjecture the supervisor. To be permissive, for $i=0$ every possible path is allowed. After that, $\mathcal{K}^i$ where $i\geq1$ will be conjectured iteratively until neither positive nor negative counterexample is generated. Then the supervisor is obtained by eliminating all its transitions to the non-accepting states.

A notable difference of our modified L* learning algorithm from the traditional one is that in our framework, the answer to the membership query might be changed by the counterexamples returned from the conjecture, while in traditional L* learning the answer to the membership query will remain the same throughout the learning process. The reason is that in our case, as we are looking for a permissive supervisor, when some finite path $\omega$ is accepted by the $\mathcal{M}_D$, our membership query will return true and our supervisor will allow it to happen first. But it could be the case that $\omega\in\pi^i$ at some iteration $i$ later which means that it is actually found to be a counterexample path and should be eliminated. Then the answer to this membership query will be changed from true to false and the observation table will be again checked for closedness and consistency and may bring several additional rounds of membership queries.

\subsubsection{Counterexample selection}\label{subsubsec:CE}
When $\widetilde{\mathcal{M}}^i \not\models \phi$ where $\phi=\mathcal{P}_{\leq p}(\varphi)$, it will return the DTMC $\widetilde{\mathcal{M}}_{\sigma_i}$ induced by the optimal scheduler $\sigma_i$ on $\widetilde{\mathcal{M}}^i$ such that $P_{\widetilde{\mathcal{M}}_{\sigma_i}}(\varphi)>p$. Then this stage is responsible for selecting a particular counterexample path $\pi^i$ to eliminate. We could apply the hop-constrained shortest path (HSP) algorithm
\cite{han2009counterexample} to find the path with the largest probability that is not induced by $\sigma_{min}$. Since $\sigma_{min}$ is not a counterexample strategy, such path is guaranteed to exist. 
The resulting $\pi^i$ will then be passed to the third stage to answer the conjecture in our modified L* learning algorithm. Note that in our previous work \cite{wu2015counterexample,wu2016counterexample}, we proposed to apply the hop-constrained $k$ shortest path (HKSP) algorithm to find the smallest $k$ paths so that eliminating them will make sure the DTMC will not be counterexample any more. The complexity of HKSP is higher than the HSP since HKSP has to search for $k$ as well. Additionally, any state action path with length no larger than $t$ that exists in the original MDP but not in the supervised MDP and none of the counterexample paths is its prefix can serve as a positive counterexample. Such path can be found by computing the difference between the regular languages generated by the underlying finite automata (with the state action pair as alphabet) of the MDP and the supervised MDP. 

\subsubsection{Correctness and termination}
\begin{assumption}\label{assumption:support}
	For each $s\in S$ from $\mathcal{M}$, the support for every distribution $\mu_s^a$ where $\mu_s^a(s') = T(s,a,s')$ are the same.
\end{assumption}
Each iteration of synthesis flow will terminate because $\mathcal{K}$ is a DFA and L* learning algorithm learns a minimal DFA in a finite number of queries \cite{angluin1987learning}. Theorem \ref{thm:correct} proves that the result of supervisor synthesis is correct by design and always terminates.
%
\begin{thm}\label{thm:correct}
	If the supervisor exists, the synthesized supervisor $\mathcal{K}$ parallel composes with original MDP $\mathcal{M}$ satisfies the specification $\phi$ and the whole synthesis process also terminates in finite iterations and the supervisor is nonblocking. 
\end{thm}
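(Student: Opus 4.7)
The plan is to split the theorem into three separate claims—correctness of the output supervisor, termination of the main loop, and the nonblocking property—and address each in turn. For correctness I would argue directly from the exit condition of the while loop in Algorithm \ref{alg:supersyn}: control reaches the Break (and hence the final Return on line 10) only when the model checker certifies $\widetilde{\mathcal{M}}^i \models \phi$ and, simultaneously, no positive counterexample exists. The first condition immediately gives $\mathcal{M}||_{sup}\mathcal{K}^i \models \phi$, and taken together with the positive-counterexample check it also guarantees that no path of length at most $k$ in $\mathcal{M}$ that is safe to allow has been erroneously omitted from $\mathcal{K}^i$, so the returned supervisor is both correct and maximally permissive with respect to the paths seen.

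For termination I would exploit that $\phi$ is a time-bounded formula with horizon $k$, so only finite state-action paths of length at most $k$ are relevant to membership in the target language of the supervisor. There are finitely many such paths in the finite MDP $\mathcal{M}$. Each outer iteration processes one counterexample path $\pi^i$ of length at most $k$: a negative counterexample removes a path previously accepted by $\mathcal{K}^i$, while a positive counterexample adds a path previously rejected. The monotonicity observation I would hinge the argument on is that, by the construction in Section \ref{subsubsec:CE}, a positive counterexample cannot have any earlier negative counterexample as a prefix; hence once a bad path is explicitly removed it cannot be resurrected. This bounds the total number of classification changes by the finite count of relevant paths, and L*'s own guarantee of a finite conjecture sequence for any fixed target language closes the termination argument.

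For nonblocking I would lean on Assumption \ref{assumption:support}: because all action distributions from a given state share the same support, the underlying successor structure of $\mathcal{M}_D$ is preserved under any action choice. Combined with the positive-counterexample step, any reachable state under $\mathcal{K}^i$ that lost all of its enabled actions would spawn a positive counterexample—a short path present in $\mathcal{M}_D$ but absent from $\widetilde{\mathcal{M}}^i$ having no counterexample path as a prefix—contradicting the exit condition that terminated the loop. Hence the final $\mathcal{K}^i$ must leave at least one action enabled at every reachable state, which is exactly the nonblocking property.

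The hardest step I expect is the no-oscillation piece of the termination argument: I need to show that the interplay of positive and negative counterexamples, together with L*'s automatic adjustments of its observation table when a membership answer flips, cannot cause the same path to change classification unboundedly often. The prefix restriction on positive counterexamples is the key leverage, but making the bookkeeping precise—especially accounting for the L* refinements that can change membership answers on paths that were never themselves supplied as explicit counterexamples—is where the delicate case analysis will live, and it is also where the role of Assumption \ref{assumption:support} in keeping the state graph stable across iterations must be made rigorous.
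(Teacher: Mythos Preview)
Your correctness argument from the loop's exit condition matches the paper's. The termination and nonblocking arguments, however, take a different route from the paper, and the nonblocking part has a real gap.

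For termination, the paper does not track path classifications or worry about oscillation. It simply observes that a time-bounded property admits at most $|SA|^t$ relevant schedulers, and each negative-counterexample iteration eliminates an entire class of schedulers agreeing on the removed path; hence the number of iterations is bounded by $|SA|^t$. Your classification-change bookkeeping can likely be made to work, but the scheduler-counting viewpoint sidesteps the oscillation issue you single out as the hard step.

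For nonblocking, the paper's argument is not by contradiction via positive counterexamples. It uses the fact---built into the HSP step of Section~\ref{subsubsec:CE}---that a negative counterexample is always chosen to be a path \emph{not} induced by $\sigma_{min}$. Consequently $\sigma_{min}$ is never pruned by the learner, and Assumption~\ref{assumption:support} (identical supports) guarantees that $\sigma_{min}$ supplies an enabled action along every reachable path. Your argument asserts that a blocked state would necessarily produce a positive counterexample, but this is not justified: by definition a positive counterexample must have \emph{no} negative counterexample as a prefix, and you give no reason why at least one action extension from the blocked state escapes all previously recorded negative counterexamples. Closing that step requires exhibiting one action that can never be disabled---precisely the role played by $\sigma_{min}$, which you never invoke. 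Bringing $\sigma_{min}$ into your argument both fills this gap and dissolves the oscillation concern, since negative counterexamples can then only ever move membership answers in one direction relative to the $\sigma_{min}$-preserving target.
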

\begin{proof}
	If $\mathcal{M}\models\phi$, the synthesis terminates when $i=0$ and the theorem trivially holds. Otherwise, for probabilistic property in the form of $\phi=\mathcal{P}_{\leq p}(\varphi)$ where $\varphi=\phi_1U^{\leq t}\phi_2$, the number of possible schedulers is upper bounded by $|SA|^t$. In each iteration, we eliminate a class of schedulers that shares the same decision on the counterexample path. Therefore, the number of iterations $D\leq |SA|^t<\infty$. That is, the synthesis terminates in a finite steps and from the termination condition, we can guarantee that $\mathcal{M}||\mathcal{K}\models\phi$. For nonblocking, from our learning process, $\sigma_{min}$ will always be preserved in the supervisor. Furthermore, from Assumption \ref{assumption:support}, we know $\sigma_{min}$ will be defined for all possible paths.
\end{proof}
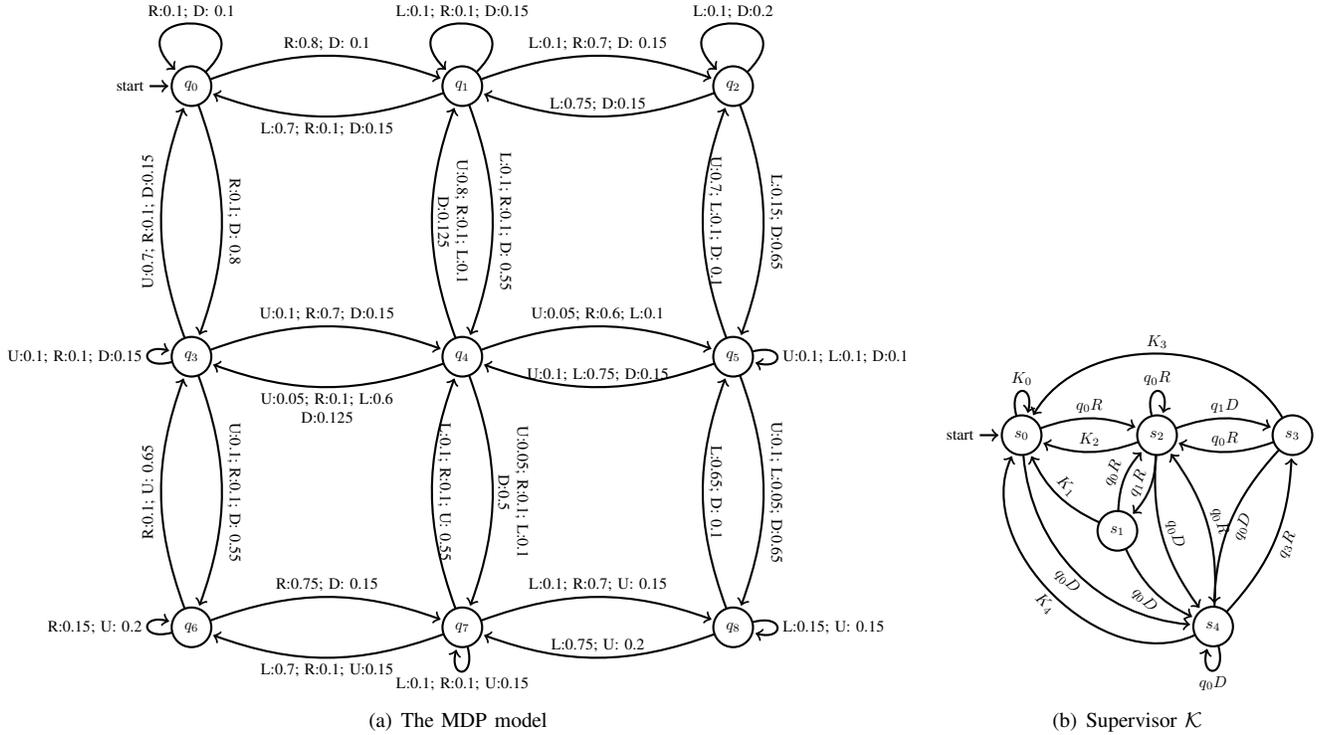
\begin{figure*}[ht]
	
	\centering
	\subfigure[The MDP model]{
		\begin{tikzpicture}[shorten >=1pt,node distance=6cm,on grid,auto, bend angle=20, thick,scale=0.6, every node/.style={transform shape}]
		\node[state,initial] (s_0)   {$q_0$}; 
		\node[state] (s_1) [right= of s_0] {$q_1$}; 		
		\node[state] (s_2) [right= of s_1] {$q_2$}; 
		\node[state] (s_3) [below =of s_0] {$q_3$}; 	
		\node[state] (s_4) [below= of s_1] {$q_4$}; 
		\node[state] (s_5) [below =of s_2] {$q_5$}; 
		\node[state] (s_6) [below =of s_3] {$q_6$}; 	
		\node[state] (s_7) [below= of s_4] {$q_7$}; 
		\node[state] (s_8) [below =of s_5] {$q_8$}; 
		
		\path[->] 
		
		(s_0) edge  [bend left=20] node  [pos=0.5, sloped, above][align=center][fontscale=0.01]{R:0.8; D: 0.1} (s_1)
		(s_0) edge [pos=0.5, loop, above=0.1] node [align=center] {R:0.1; D: 0.1} (s_0)
		(s_0) edge  [bend left=20] node  [pos=0.5, sloped, above][align=center]{R:0.1; D: 0.8} (s_3)

		(s_1) edge [bend left=20]  node [pos=0.5, sloped, above][align=center]{ L:0.1; R:0.7; D: 0.15} (s_2)
		
		(s_1) edge [bend left=20]  node[pos=0.5, sloped, below][align=center]{ L:0.7; R:0.1; D:0.15} (s_0)
		
		(s_1) edge [bend left=20]  node [pos=0.5, sloped, above][align=center]{ L:0.1; R:0.1; D: 0.55} (s_4)
		
		(s_1) edge [pos=0.5, loop, above=0.1] node [align=center] {L:0.1; R:0.1; D:0.15} (s_1)

		(s_2) edge  [bend left=20] node  [pos=0.5, sloped, above][align=center][fontscale=0.01]{L:0.75; D:0.15} (s_1)
		(s_2) edge [pos=0.5, loop, above=0.1] node [align=center] {L:0.1; D:0.2} (s_2)
		(s_2) edge  [bend left=20] node  [pos=0.5, sloped, above][align=center]{L:0.15; D:0.65} (s_5)
		
		(s_3) edge [bend left=20]  node [pos=0.5, sloped, above][align=center]{ U:0.7; R:0.1; D:0.15} (s_0)
		
		(s_3) edge [bend left=20]  node[pos=0.5, sloped, above][align=center]{ U:0.1; R:0.7; D:0.15} (s_4)
		
		(s_3) edge [bend left=20]  node [pos=0.5, sloped, above][align=center]{ U:0.1; R:0.1; D: 0.55} (s_6)
		
		(s_3) edge [pos=0.5, loop left] node [align=center] {U:0.1; R:0.1; D:0.15} (s_3)
		
		(s_4) edge [bend left=20]  node [pos=0.5, sloped, above][align=center]{ U:0.8; R:0.1; L:0.1\\D:0.125 } (s_1)
		
		(s_4) edge [bend left=20]  node[pos=0.5, sloped, below][align=center]{ U:0.05; R:0.1; L:0.6\\D:0.125} (s_3)
		
		(s_4) edge [bend left=20]  node [pos=0.5, sloped, above][align=center]{ U:0.05; R:0.6; L:0.1} (s_5)
		
		(s_4) edge [bend left=20]  node [pos=0.5, sloped, above][align=center]{ U:0.05; R:0.1; L:0.1\\D:0.5} (s_7)
		
		(s_5) edge [bend left=20]  node [pos=0.5, sloped, above][align=center]{ U:0.7; L:0.1; D: 0.1} (s_2)
		
		(s_5) edge [bend left=20]  node[pos=0.5, sloped, above][align=center]{ U:0.1; L:0.75; D:0.15} (s_4)
		
		(s_5) edge [bend left=20]  node [pos=0.5, sloped, above][align=center]{ U:0.1; L:0.05; D:0.65} (s_8)
		
		(s_5) edge [pos=0.5, loop right] node [align=center] {U:0.1; L:0.1; D:0.1} (s_5)
		
		(s_6) edge  [bend left=20] node  [pos=0.5, sloped, above][align=center]{R:0.1; U: 0.65} (s_3)
		(s_6) edge [pos=0.5, loop left] node [align=center] {R:0.15; U: 0.2} (s_6)
		(s_6) edge  [bend left=20] node  [pos=0.5, sloped, above][align=center]{R:0.75; D: 0.15} (s_7)
		
		(s_7) edge [bend left=20]  node [pos=0.5, sloped, above][align=center]{ L:0.1; R:0.1; U: 0.55} (s_4)
		
		(s_7) edge [bend left=20]  node[pos=0.5, sloped, below][align=center]{ L:0.7; R:0.1; U:0.15} (s_6)
		
		(s_7) edge [bend left=20]  node [pos=0.5, sloped, above][align=center]{ L:0.1; R:0.7; U: 0.15} (s_8)
		
		(s_7) edge [pos=0.5, loop below] node [align=center] {L:0.1; R:0.1; U:0.15} (s_7)

		(s_8) edge  [bend left=20] node  [pos=0.5, sloped, above][align=center]{L:0.75; U: 0.2} (s_7)
		(s_8) edge [pos=0.5, loop right] node [align=center] {L:0.15; U: 0.15} (s_8)
		(s_8) edge  [bend left=20] node  [pos=0.5, sloped, above][align=center]{L:0.65; D: 0.1} (s_5)		
		;
		
		\end{tikzpicture}
	}
	\subfigure[Supervisor $\mathcal{K}$]{
		\begin{tikzpicture}[shorten >=1pt,node distance=3cm,on grid,auto, bend angle=20, thick,scale=0.6, every node/.style={transform shape}]
		\node[state,initial] (s_0)   {$s_0$};
		
		\node[state] (s_2) [right= of s_0] {$s_2$}; 
		\node[state] (s_1) [below right =of s_0] {$s_1$}; 		
		\node[state] (s_3) [right= of s_2] {$s_3$}; 
		\node[state] (s_4) [below right=  of s_1] {$s_4$}; 
		
		\path[->] 
		
		(s_0) edge [loop above]  node [pos=0.5, sloped, above]{$K_0$} (s_0)
		
		(s_0) edge [bend left=20]  node [pos=0.5, sloped, above]{$q_0R$} (s_2)
		
		(s_0) edge [bend right=40]  node [pos=0.5, sloped, below]{$q_0D$} (s_4)
		
		(s_1) edge [bend left=20] node [pos=0.5, sloped, above][align=center]{$K_1$} (s_0)
		
		(s_1) edge [bend left=20]  node [pos=0.5, sloped, above]{$q_0R$} (s_2)
		
		(s_1) edge [bend right=20]  node [pos=0.5, sloped, below]{$q_0D$} (s_4)
		
		(s_2) edge [bend left=20] node [pos=0.5, sloped, above][align=center]{$K_2$} (s_0)
		
		(s_2) edge [bend left=20]  node [pos=0.5, sloped, above]{$q_1R$} (s_1)

		(s_2) edge [loop above]  node [pos=0.5, sloped, above]{$q_0R$} (s_2)

		(s_2) edge [bend left=20]  node [pos=0.5, sloped, above]{$q_1D$} (s_3)
		
		(s_2) edge [bend right=20]  node [pos=0.5, sloped, above]{$q_0D$} (s_4)
		
		(s_3) edge [bend right=60] node [pos=0.5, sloped, above][align=center]{$K_3$} (s_0)
		
		(s_3) edge [bend left=20]  node [pos=0.5, sloped, above]{$q_0R$} (s_2)
		
		(s_3) edge [bend right=20]  node [pos=0.5, sloped, below]{$q_0D$} (s_4)
		
		(s_4) edge [bend left=70] node [pos=0.5, sloped, below][align=center]{$K_4$} (s_0)
		
		(s_4) edge [bend right=20]  node [pos=0.5, sloped, above]{$q_0R$} (s_2)
		
		(s_4) edge [bend right=20]  node [pos=0.5, sloped, below]{$q_3R$} (s_3)

		(s_4) edge [loop below]  node [pos=0.5, sloped, below]{$q_0D$} (s_4)	
		;      
		\end{tikzpicture}
	}
	\caption{The MDP model to synthesis and the resulting supervisor}
	\label{fig:single}
\end{figure*}

\subsubsection{Complexity}
We define the size of an MDP $\mathcal{M}, S_{\mathcal{M}}$ as the total number of nondeterministic choices. In each iteration $i$,  model checking has complexity polynomial with the size of the composed MDP $\widetilde{M}^i$ which is upper bounded by $|SA|^t$ since we only consider all the possible actions in $t$ steps. The counterexample selection algorithm has complexity exponential with $t$. It is also linear with the length of the PCTL formula $L_f$ where $L_f$ is defined to be the number of logical and temporal operators in the formula. The complexity for checking positive counterexample is bounded by $|SA|^t$. The number of iterations in the worst case is exponential with the time horizon $t$ since there are at most $|SA|^t$ number of possible policies. As to L* learning, the number of the states in the resulting DFA is upper bounded by $|SA|^t$ and the longest counterexample is bounded by $t$.  Therefore to sum up, in the worst case our algorithm has a complexity polynomial to $|S|$ and $|A|$, exponential with $t$ and linear with $L_f$. 
\subsubsection{Illustrative example}

We give an illustrative example to show our learning based permissive supervisor synthesis framework. The system is as shown in Fig. \ref{fig:single} and we ignore the labeling for simplicity. It is essentially a robot with initial position at $s_0$ moving in a grid space. The action set is \{U,D,L,R\} representing moving up, down, left and right. The number after the colon is the transition probability. For $s_4$, we didn't draw its self loop transition due to the space limitation. But its transition probabilities can be easily inferred. 

The probabilistic specification is $\phi=P_{\leq 0.6}(\varphi)$ where $\varphi=trueU^{\leq 5}q_5$. To find the supervisor, we developed our own software toolkit which first does the model checking and gets the induced DTMC based on counterexample strategy. Then both DTMC and the specification (on this DTMC) is sent to COMICS to find the counterexample path which is then fed into L* learning using libalf. Once the conjectured supervisor is found, we parallel it with the original plant to see if the specification is satisfied. As in Algorithm \ref{alg:supersyn}, it runs iteratively until no counterexample is found.

For this particular example, the synthesis process runs for 4 iterations and the maximum probability to satisfy $\varphi$ at the termination is $0.573053$. The resulting supervisor has 5 states, as shown in Fig \ref{fig:single} where
\begin{itemize}
	\item $K_0 = \{q_ia|\forall i>0,(q_i,a,*)\in T\}$
	\item $K_1 = \{q_ia|\forall i>0,(q_i,a,*)\in T\}\backslash\{q_2D\}$
	\item $K_2 = \{q_ia|\forall i>1,(q_i,a,*)\in T\}\cup\{q_1L\}$
	\item $K_3 = \{q_ia|\forall i>0,(q_i,a,*)\in T\}\backslash\{q_4R\}$
	\item $K_4 = \{q_ia|\forall i>0,(q_i,a,*)\in T\}\backslash\{q_3R\}$

\end{itemize}
We also tested another case setting the probability threshold to $0.5$ in the specification, this time the synthesis process run for 14 iterations, the maximum probability at the termination is $0.498412$ and the resulting supervisor has 18 states. It is expected since the lower probability threshold can result in more counterexample strategies and thus more iterations are needed to eliminate them.  

	\subsection{Multi-Agent Case}
	
	For simplicity, we assume that $N=2$ but our framework can be readily extended to the case where $N>2$.
	\subsubsection{Learning based supervisor synthesis}
	Algorithm \ref{alg:nsupersyn} describes the flow of our framework. 
	The whole work flow is similar to the single agent case, but with several key differences as described below. 
	%
	
	The first difference is the  model checking. Given an MAS consisting of $N$ system models $\mathcal{M}_i=(S_i,s^i_0,A_i,T_i,$ $L_i),~i=1,...,N$, the initial local supervisors $\mathcal{K}_i$ are trivial ones which allow every action. Then to avoid state space explosion, we will apply compositional model checking techniques to be introduced in Section \ref{sec:CMC} to see if the given specification $\phi$ can be satisfied by the uncontrolled system. If the answer is yes and there is no positive counterexample, then we are done. Otherwise, counterexamples will be returned. 
	
	
	The second difference is the subsystem selection. After getting the returned counterexamples, it is time to determine which subsystem is at fault to such violation. The detail will be discussed in Section \ref{section:CSS}. After identifying which subsystem $\mathcal{M}_k$ actively causes the violation of the specification, its local supervisor will be refined. Other subsystem's supervisor will be refined based on the positive counterexamples. The refining process is based on L* learning algorithm with the returned counterexamples. 
	%
	\begin{figure}[!t]
		\removelatexerror
		\begin{algorithm}[H]
			\SetKwData{Left}{left}\SetKwData{This}{this}\SetKwData{Up}{up}
			\SetKwFunction{Union}{Union}\SetKwFunction{FindCompress}{FindCompress}
			\SetKwInOut{Input}{input}\SetKwInOut{Output}{output}
			\Input{  $\mathcal{M}_i=(S_i,s_{0,i},A_i,T_i, L_i),~i=1,...,N$, probabilistic specification $\phi=\mathcal{P}_{\leq p}(\varphi)$}
			\Output{Local supervisors $\mathcal{K}_i,i\in[1,N]$ such that $\widetilde{\mathcal{M}}\models\phi$, where $\widetilde{\mathcal{M}}=\widetilde{\mathcal{M}_1}||\widetilde{\mathcal{M}_2}||...||\widetilde{\mathcal{M}_N}$ and $\widetilde{\mathcal{M}_i}=\mathcal{M}_i||_{sup}\mathcal{K}_i,~i=1,...,N$. }
			\BlankLine
			\nl $j\leftarrow 0$, $\mathcal{K}^0_i=$DFA $\mathcal{G}_i$ such that $\mathcal{L}(\mathcal{G}_i)=\Sigma_i^*$,$\widetilde{\mathcal{M}_i^0}\leftarrow\mathcal{M}_i$;\		
			

			\nl \While{true} {	
				\nl $k=-1$\;
				\nl \uIf
				{ModelChecking($\widetilde{\mathcal{M}^j},\phi$)=false}{
					$\pi^j\leftarrow$ counterexample path  \;
					
					$k\leftarrow$SELECT($\mathcal{M},\pi^j$) from \cite{lin2011counterexample}\;
					$\pi_k^j\leftarrow$ projection from $\pi^j$\;
				}

				\For{$i\in[1,N]$}{\uIf{$i\neq k$}{$\pi^j_i\leftarrow$ positive counterexample\;}
					\nl { $\mathcal{K}^{j+1}_{i}\leftarrow$ L* Learning($\pi^j_i$)\;}
					\nl $\widetilde{\mathcal{M}}^{j+1}_{i}\leftarrow\mathcal{M}_{i}||\mathcal{K}^{j+1}_{i}$;	}
				\uIf{Positive or negative counterexamples exist}{$j\leftarrow j+1$\;}
				\uElse{Break;}
				
			}
			\nl \Return {$\mathcal{K}^j_i,i\in[1,N]$}\;
			
			~\caption{N-SPVSYN($\mathcal{M},\phi$)}\label{alg:nsupersyn}			
		\end{algorithm}
	\end{figure}

	\subsubsection{Compositional model checking}\label{sec:CMC}
	With multiple agents in the system, the total state space grows exponentially with $N$ which makes the model checking computation prohibitively expensive. Therefore the conventional model checking methods will simply fail due to the high computation complexity. To prevent this, we propose to refer to compositional verification of the probabilistic systems in which the composed system is never built. The most well-known compositional technique is assume-guarantee reasoning (AGR). It has already been successfully applied to non-probabilistic model checking \cite{henzinger1998you} and recently has been extended to probabilistic systems \cite{hermanns2008probabilistic,chadha2010counterexample,komuravelli2012assume}. Here we use the asymmetric rules (ASYM) as shown below.\\
	\begin{equation}
	\begin{gathered}
	\underline{1:\mathcal{M}_1||A_b\models\phi~~~~~~ 2:\mathcal{M}_2\preceq A_b}\\
	\mathcal{M}_1||\mathcal{M}_2\models\phi
	\end{gathered}
	\end{equation}
	where $A_b$ is called the assumption for $\mathcal{M}_2$. Often the assumption is much smaller than the original system and the composition of $\mathcal{M}_1$ and $A_b$ could be much smaller to alleviate the computation burden. The key problem for ASYM is to find a proper assumption and the main idea is to use counterexamples to refine the assumptions until the ASYM can answer the model checking problem.  
	
	While our framework has the flexibility in choosing different ASYM algorithms, we require the ASYM to be sound and complete with which a proper assumption can always be found for model checking purpose by showing the satisfaction of given specification or returning real counterexamples that witness the violation for the original system. For example, in \cite{hermanns2008probabilistic}, probabilistic automaton is used as the assumption form; with interpolation \cite{henzinger2004abstractions} based refinement algorithm, a sound and complete counterexample-guided abstraction refinement (CEGAR) framework is constructed with counterexample in the form of state action paths. With real counterexamples returned from the ASYM, the distributed supervisor can be revised accordingly.

	\subsubsection{Counterexample and subsystem selection}\label{section:CSS}
	%
	%
	%
	\indent For probabilistic model checking of multiple plants, when the specification is not satisfied, the returned counterexamples belong to the composed system. After getting the counterexample paths, what's different from dealing with the single plant is that it is important to find out which subsystem \emph{actively} performs the last action of each paths and causes the specification violation.\\
	\indent To do this, inspired by \cite{lin2011counterexample}, we propose to partition the action set into \emph{active}, \emph{passive} and \emph{normal} actions. Formally, for each subsystem $i$, we categorize the action set $A_i$ into three disjoint sets $A_{i,a},A_{i,p}$ and $A_{i,n}$, namely active, passive and normal actions and we only consider the \emph{well-communicated systems} \cite{lin2011counterexample} as defined below.
	\begin{definition}
		Given an MAS $\mathcal{M}$ consisting of $N$ system models $\mathcal{M}_i=(S_i,s^i_0,A_i,T_i,$ $L_i),~i=1,...,N$, we say $\mathcal{M}$ is well communicated if the following hold:
		\begin{itemize}
			\item $\forall i\in [1,N]$, for each active action $a\in A_{i,a}$, there exists at least one corresponding passive action $a\in A_{j,p},j\neq i$. 
			\item $\forall i\in [1,N]$, for each passive action $a\in A_{i,p}$, there exists one and only one corresponding active action $a\in A_{j,a},j\neq i$.
		\end{itemize}
	\end{definition}
	\indent It is worth noting that such labeling only helps us to identify which subsystem is the cause of the violation. It is completely ignored in the model checking and supervisor synthesis stage. We adopt the SELECT algorithm (Algorithm 4) in \cite{lin2011counterexample}. Intuitively, we start from the end of the action sequence and find the subsystem that actively causes the last action to be performed.
	%
	%
	%
	In this case, we can just focus on which subsystem $\mathcal{M}_j$ performs the last active action of counterexample path $\omega$. Once we find the subsystem $\mathcal{M}_j$, by projecting $\omega$ to the local action set of $\mathcal{M}_j$, we get a local counterexample path $\omega_{\mathcal{M}_j}$. Then it can be seen that the problem is to learn a local supervisor $\mathcal{K}_j$. 
	For other subsystems that are not selected, their supervisors remain unchanged. After that,  we perform another round of model checking and refinement until no counterexample is generated. 
	\subsubsection{Computation Complexity, Correctness and Termination}
	In each iteration, for the framework we use in Section \ref{sec:agrexp}, the model checking has the complexity upper bounded by $\prod_{i=1}^N|S_iA_i|^t$. It is also linear with the length of the PCTL formula $L_f$. The number of iterations is finite and in the worst case is exponential with the time horizon $t$ and $N$ since there are at most $\prod_{i=1}^N|S_iA_i|^t$ number of possible policies. As to L* learning, the number of the states in the resulting DFA is upper bounded by $|SA|^t$ and the longest counterexample is bounded by $t$. Therefore in the worst case, our algorithm has a complexity exponential with $t,N$, polynomial with $S_{\mathcal{M}_i}$,$|S_i|$, $|A_i|$ and linear with $L_f$. However it is seen that in practice the compositional model checking rarely assume huge computation \cite{komuravelli2012assume}. So this complexity analysis is rather conservative.
	
	We then prove that the result of supervisor synthesis is correct by design and our framework always terminates. 

	\begin{thm}\label{thm:correct}
		The synthesized supervisors $\mathcal{K}_i$ parallel composes with original MDP $\mathcal{M}_i$ and the composed controlled system satisfies the specification $\phi$. The whole synthesis process also terminates in finite iterations. 
	\end{thm}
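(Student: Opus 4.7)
The plan is to mirror the single-agent argument (Theorem~\ref{thm:correct} in the previous subsection) and lift it to the compositional setting, splitting the claim into \emph{termination} and \emph{correctness at termination}, and then justifying each using, respectively, a finite state-space counting argument and the assumed soundness/completeness of the chosen ASYM rule.

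For termination, I would first observe that because the specification $\varphi = \phi_1\,\mathcal{U}^{\leq t}\,\phi_2$ is time-bounded, only prefixes of the composed system of length at most $t$ are relevant to $\phi$, so the set of schedulers of $\widetilde{\mathcal{M}}$ distinguishable by $\phi$ is finite, bounded by $\prod_{i=1}^N |S_iA_i|^t$. The key claim to establish is that each iteration of Algorithm~\ref{alg:nsupersyn} makes strict progress in this finite lattice. For a negative-counterexample iteration, the SELECT procedure of \cite{lin2011counterexample} identifies a unique subsystem $\mathcal{M}_k$ whose last active action caused the violation; projecting $\pi^j$ onto $\Sigma_k$ gives a local state-action trace $\pi_k^j$ that the updated $\mathcal{K}_k^{j+1}$, produced by the L* refinement, will reject, and hence every scheduler of the global system whose induced path agrees with $\pi^j$ up to that active action is disabled after the update. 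For a positive-counterexample iteration, a path that was previously missing from $\widetilde{\mathcal{M}}$ is added; here I would appeal to the same observation used in the single-agent case, namely that L* maintains a consistent observation table whose size is monotone in the number of processed counterexamples, so the same counterexample cannot be returned twice. Together these give an upper bound on the total number of iterations of $\prod_{i=1}^N |S_iA_i|^t$, which is finite.

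For correctness, the termination condition of the while-loop is that \emph{neither} a negative nor a positive counterexample is produced in the current iteration. By the assumed soundness and completeness of the ASYM rule invoked in the ModelChecking step (as discussed in Section~\ref{sec:CMC}), the absence of a negative counterexample certifies $\widetilde{\mathcal{M}}^j \models \phi$. The absence of a positive counterexample ensures that no path of length at most $t$ exists in the original $\mathcal{M}$ that was unnecessarily pruned by the local supervisors; while this is needed for permissiveness it is not required for $\phi$-satisfaction. Nonblocking, as in the single-agent proof, follows from the fact that an underlying min-scheduler strategy is preserved by the learning procedure together with Assumption~\ref{assumption:support}, applied componentwise to each $\mathcal{M}_i$.

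The main obstacle will be making the progress argument watertight across \emph{both} kinds of counterexamples simultaneously, because positive counterexamples can re-enable behaviors that a previous negative counterexample had pruned, so one has to rule out oscillation. My plan to handle this is to observe that a positive counterexample, by construction in Section~\ref{subsubsec:CE}, is a state-action path of length at most $t$ that is present in the underlying finite automaton $\mathcal{M}_{i,D}$ but absent from $\widetilde{\mathcal{M}}_i$ \emph{and} has no prefix among the already-processed counterexample paths; this last constraint guarantees that a re-enabled path is never identical to a previously disabled one, so the pair (set of disabled prefixes, set of enabled prefixes of length $\leq t$) evolves monotonically in the product lattice $2^{\Sigma^{\leq t}} \times 2^{\Sigma^{\leq t}}$, which is finite. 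This gives the required well-founded ordering and closes the termination argument; combined with the ASYM-soundness at termination, the theorem follows.
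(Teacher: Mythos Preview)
Your decomposition into termination (finite time-bounded scheduler space plus per-iteration progress) and correctness at termination (soundness/completeness of ASYM) matches the paper's own argument, and in fact your treatment of the oscillation issue between positive and negative counterexamples is more careful than the paper's proof, which simply asserts that ``the number of iterations is also finite since we are dealing with time bounded properties'' without analyzing the interaction.

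There is, however, a genuine error in your final paragraph. You claim nonblocking follows, ``as in the single-agent proof,'' from preservation of a min-scheduler $\sigma_{min}$ together with Assumption~\ref{assumption:support} applied componentwise. This is not available in the multi-agent setting: the paper explicitly remarks, immediately after its proof, that because compositional model checking is used ``we are not able to get the minimum probability and the corresponding $\sigma_{min}$ like in the single agent setting,'' and that consequently ``at least one of the supervisors becomes blocking'' is a real possibility. The theorem as stated does not claim nonblocking, so this does not invalidate your proof of the actual statement; but your attempt to port the single-agent nonblocking argument over is based on a premise the multi-agent framework does not supply, and you should drop that paragraph rather than present it as part of the proof.
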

	\begin{proof}
		From the soundness and completeness of ASYM rule with CEGAR, we know that our compositional model checking will always terminate and produce the correct answer. If $\mathcal{M}\models\phi$, the synthesis terminates when $i=0$ and our theorem trivially holds. Otherwise in each iteration, counterexample selection algorithm will always terminate since we have only a finite number of agents. In the local supervisor synthesis, our modified L* learning algorithm will always terminate in a finite number of quires \cite{angluin1987learning}. So each iteration will terminate in finite time. Furthermore, the number of iterations is also finite since we are dealing with time bounded time properties. To sum up, the synthesis terminates in  finite steps and from the termination condition and the soundness of AGAR, we can guarantee that $\widetilde{\mathcal{M}}\models\phi$, where $\widetilde{\mathcal{M}}=\widetilde{\mathcal{M}_1}||\widetilde{\mathcal{M}_2}||...||\widetilde{\mathcal{M}_N}$ and $\widetilde{\mathcal{M}_i}=\mathcal{M}_i||\mathcal{K}_i,~i=1,...,N$.
	\end{proof}
	One of the conservativeness of the proposed method is that since we are using the compositional model checking , we are not able to get the minimum probability and the corresponding $\sigma_{min}$ like in the single agent setting. It could be the case that at least one of the supervisors becomes blocking. Such scenario may be avoided in the run time by not executing the actions that lead to the blocking state. It also could be the case that the minimum probability to satisfy the property $\varphi$ is larger than $p$. Then the model checking will keep on reporting counterexamples until the supervisors in one or more subsystems become blocking. 
	\subsubsection{An illustrative example}\label{sec:agrexp}
	%
	%
	%

	Here we give an example to show the integration of our framework with assume-guarantee reasoning for MDPs. While our framework does not limit the choice of assume-guarantee reasoning algorithm to reduce the state space of MDPs, the counterexample guided abstraction refinement method developed in \cite{hermanns2008probabilistic} is used for illustration purpose in this example. The system consists of two MDP models. The first one is as shown in Fig. \ref{fig:mpd1}.
	\begin{figure}[h]
		
		\centering
		
		\begin{tikzpicture}[shorten >=1pt,node distance=3cm,on grid,auto, bend angle=20, thick,scale=0.75, every node/.style={transform shape}]
		\node[state,initial above] (s_0)   {$s_0$}; 
		\node[state] (s_1) [right= of s_0] {$s_1$}; 		
		\node[state] (s_2) [left= of s_0] {$s_2$};

		\path[->] 
		
		(s_0) edge [bend left=30]  node [pos=0.5, sloped, above]{$a$} (s_1)

		(s_1) edge [bend left=30]  node [pos=0.5, sloped, above]{$a$} (s_0)
		
		(s_0) edge  [bend left=30] node [pos=0.5, sloped, above]{$b,c$} (s_2)	
		
		(s_2) edge [bend left=20]  node [pos=0.5, sloped, above]{$b,c$} (s_0)

		;      
		\end{tikzpicture}
		
		\caption{$\mathcal{M}_1$ model}
		\label{fig:mpd1}
	\end{figure}
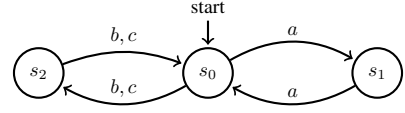
	
	The system model $\mathcal{M}_2$ is extended from the example discussed in \cite{hermanns2008probabilistic}. For $\mathcal{M}_1$, the state set is $S_1=\{0,1,...,N-1\}\times \{bad,!bad\}, N>2$ with initial state $\langle 0,!bad \rangle$; the action set is $A=\{a,b,c\}$. The transition function is given as follows. For $i<N-1$, 
	\begin{align*}
	T(\langle i,!bad \rangle,a,\langle i+1,!bad \rangle)=0.9,\\
	T(\langle i,!bad \rangle,a,\langle i,!bad \rangle)=0.1.
	\end{align*}  
	For $i=N-1$
	\begin{align*}
	T(\langle N-1,!bad \rangle,a,\langle N-1,!bad \rangle)=0.9,\\
	T(\langle N-1,!bad \rangle,a,\langle N-1,bad \rangle)=0.1.
	\end{align*}
	For action $b$, we have it define on the initial state with
	\begin{align*}
	T(\langle 0,!bad \rangle,b,\langle N-1,bad \rangle)=0.5,\\
	T(\langle 0,!bad \rangle,b,\langle 0,!bad \rangle)=0.5.
	\end{align*}

	For $\mathcal{M}_1$, $\{a,b\}\in A_{1,p}$ and $c\in A_{1,a}$. For $\mathcal{M}_2$, $\{a,b\}\in A_{2,a}$ and $c\in A_{2,p}$. It is not hard to verify that $\{\mathcal{M}_1,\mathcal{M}_2\}$ compose a well-communicated system. $L_2(\langle N-1,bad \rangle)=\{failure\}$ and the specification is given as $\mathcal{P}_{\leq 0.3}[true~\mathcal{U}^{\leq N-1}~failure]$.

	\begin{figure}
		\centering
		
		\begin{tikzpicture}[shorten >=1pt,node distance=4cm,on grid,auto, bend angle=20, thick,scale=0.75, every node/.style={transform shape}] 
		\node[state,initial] (q0)[align=right]  {$s^\#_0$};
		\node[state] (q1) [right=of q0,align=right] {$s^\#_1$};			
		\node[state] (q2) [below=of q1,align=right] {$s^\#_2$};    			
		\path[->]
		(q0) edge node {$a:0.9$} (q1)  
		edge  [bend right=30] node[pos=0.5, sloped, above] {$b:0.5;c:0.5$} (q2)
		edge [loop above, pos = 0.5,sloped, above] node {$a:0.1;~b:0.5;~c:0.5$} ()
		(q1) edge node {$a:0.9;~a:0.1$} (q2)  
		edge [loop above, pos = 0.5,sloped, above] node {$a:0.9;~a:0.1$} ()
		; 
		\node[text width=3cm] at (-1,-0.5) 
		{\{$i=0,!bad$\}};
		\node[text width=3cm] at (6,-0.5) 
		{\{$i<N,!bad$\}};
		\node[text width=3cm] at (6,-4) 
		{\{$i=N-1,bad$\}};
		\end{tikzpicture} 
		\caption{The initial abstract quotient automaton $\mathcal{M}^\#_2$}
		\label{fig:aM1}
		
	\end{figure}
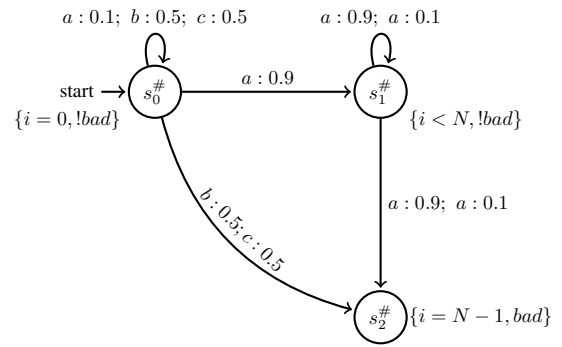
	
	%
	%
	%
	%
	%
	
	\begin{figure}
		\centering
		
		\begin{tikzpicture}[shorten >=1pt,node distance=4cm,on grid,auto, bend angle=20, thick,scale=0.75, every node/.style={transform shape}] 
		\node[state,initial] (q0)[align=right]  {$s^\#_0$};
		\node[state] (q1) [right=of q0,align=right] {$s^\#_1$};			
		\node[state] (q2) [below=of q0,align=right] {$s^\#_2$};    	
		\node[state] (q3) [below=of q1,align=right] {$s^\#_3$}; 		
		\path[->]
		(q0) edge node {$a:0.9$} (q1)  
		edge   node[pos=0.5, sloped, above] {$b:0.5;~c:0.5$} (q2)
		edge [loop above, pos = 0.5,sloped, above] node {$a:0.1;~b:0.5;~c:0.5$} ()
		(q1) edge node {$a:0.9$} (q3)  
		edge [loop above, pos = 0.5,sloped, above] node {$a:0.9;~a:0.1$} ()
		(q3) edge node [pos=0.5, sloped, below]{$a:0.1$} (q2)  
		edge [loop right,sloped,above] node {$a:0.9$} ()
		
		; 
		
		\node[text width=3cm] at (-1,-0.5) 
		{\{$i=0,!bad$\}};
		1	\node[text width=3cm] at (6,-0.5) 
		{\{$i<N-1,!bad$\}};
		\node[text width=3cm] at (4,-5) 
		{\{$i=N-1,!bad$\}};
		\node[text width=3cm] at (-2,-5) 
		{\{$i=N-1,bad$\}};	
		\end{tikzpicture} 
		
		\caption{The second abstract quotient automaton $\mathcal{M}^\#_2$.}
		\label{fig:aM2}
	\end{figure}
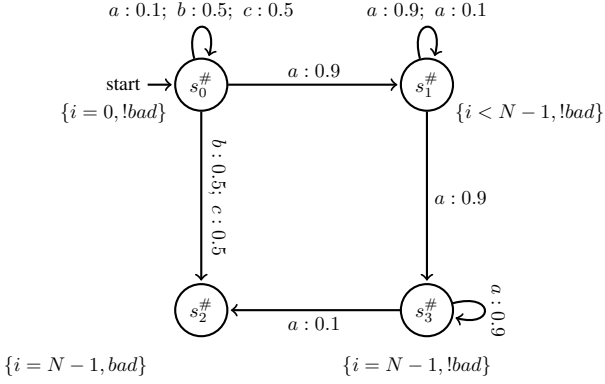

	\begin{figure}
		\centering
		\subfigure[$\mathcal{K}_1$]{		
			\begin{tikzpicture}[shorten >=1pt,node distance=4cm,on grid,auto, bend angle=20, thick,scale=0.75, every node/.style={transform shape}] 
			\node[state,initial,accepting] (s0)   {$s_0$}; 
			\node[state,accepting] (s1) [right= of s0] {$s_1$}; 		
			\path[->]

			(s0) edge node {$s_0a,s_0b$} (s1)

			(s1)
			edge [loop above, pos = 0.5,sloped, above] node {$s_*a,s_*b,s_*c$} ()
			
			; 
			\end{tikzpicture} 
		}
		\subfigure[$\mathcal{K}_2$]{		
			\begin{tikzpicture}[shorten >=1pt,node distance=4cm,on grid,auto, bend angle=20, thick,scale=0.75, every node/.style={transform shape}] 
			\node[state,initial,accepting] (s0)   {$s_0$}; 
			\node[state,accepting] (s1) [right= of s0] {$s_1$}; 		
			\path[->]

			(s0) edge node {$s^\#_0a,s^\#_0c$} (s1)

			(s1)
			edge [loop above, pos = 0.5,sloped, above] node {$s^{\#}_*a,s^\#_*b,s^\#_*c$} ()

			; 
			\end{tikzpicture} 
		}
		\caption{The resulting supervisors $\mathcal{K}_1$ and $\mathcal{K}_2$.}
		\label{fig:asupervisor2}
	\end{figure}
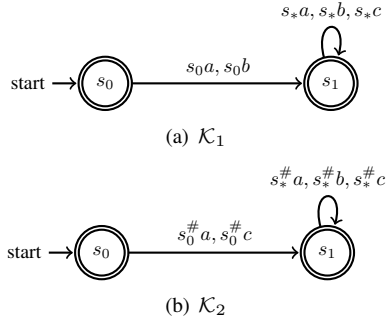
	
	Given the specification, our initial supervisors for both $\mathcal{M}_1$ and $\mathcal{M}_2$ enables all actions for any states. Then we have the initial abstract quotient automaton for $\mathcal{M}_2$ shown in Fig. \ref{fig:aM1}. This quotient automaton is a probabilistic automaton \cite{segala1995probabilistic} and can be seen as a special type of MDP where there may be many probability distributions defined for one action on a state. But the model checking problem on this quotient automaton is equivalent to MDP model checking \cite{hermanns2008probabilistic}.  In the first iteration, the model checking returns the counterexample path $(s_0,s^\#_0)c(s_2,s^\#_2)$. Since $c$ is the active action in $\mathcal{M}_1$. We select the first system to eliminate the projected path $s_0cs_2$. In the second iteration, the model checking on the abstract system first returns counterexample path as $s^\#_0as^\#_1as^\#_2$ with probability 0.81. Following algorithm in \cite{hermanns2008probabilistic}, we know this counterexample is introduced by the coarse abstraction and the refinement algorithm delivers new abstract system shown in Fig. \ref{fig:aM2}. Then the model checking on the new abstract system returns counterexample path $(s_0s^\#_0)b(s_2,s^\#_2)$ and this counterexample is a real counterexample. This time we refine the supervisor for $\mathcal{M}_2$ since $b$ is its active action. After the second iteration, no more counterexample is generated and our resulting supervisors $\mathcal{K}_1$ and $\mathcal{K}_2$ are as shown in Fig. \ref{fig:asupervisor2} where $s_*$ and $s_*^\#$ represents any $s$ and $s^\#$.

	\section{Conclusion}
	\label{sec:conclusion}
	In this paper, we proposed an automated framework of learning based  permissive supervisor synthesis for multi-agent systems. Assume-guarantee reasoning based model checking technique was applied to avoid the state space explosion problem. We also proposed to partition the action sets such that we know which subsystem is at fault when the specification is violated. It is guaranteed that we can get the correct supervisors in finite steps. For future research, we are interested in applying the same framework in partially known MDP models and more general class of specifications.

	\bibliographystyle{IEEEtran}
	\bibliography{ref1,ref}

\begin{thebibliography}{10}
\providecommand{\url}[1]{#1}
\csname url@samestyle\endcsname
\providecommand{\newblock}{\relax}
\providecommand{\bibinfo}[2]{#2}
\providecommand{\BIBentrySTDinterwordspacing}{\spaceskip=0pt\relax}
\providecommand{\BIBentryALTinterwordstretchfactor}{4}
\providecommand{\BIBentryALTinterwordspacing}{\spaceskip=\fontdimen2\font plus
\BIBentryALTinterwordstretchfactor\fontdimen3\font minus
  \fontdimen4\font\relax}
\providecommand{\BIBforeignlanguage}[2]{{%
\expandafter\ifx\csname l@#1\endcsname\relax
\typeout{** WARNING: IEEEtran.bst: No hyphenation pattern has been}%
\typeout{** loaded for the language `#1'. Using the pattern for}%
\typeout{** the default language instead.}%
\else
\language=\csname l@#1\endcsname
\fi
#2}}
\providecommand{\BIBdecl}{\relax}
\BIBdecl

\bibitem{rutten2004mathematical}
J.~J. Rutten, M.~Kwiatkowska, G.~Norman, and D.~Parker, \emph{Mathematical
  techniques for analyzing concurrent and probabilistic systems}.\hskip 1em
  plus 0.5em minus 0.4em\relax American Mathematical Soc., 2004.

\bibitem{baier2008principles}
C.~Baier and J.~Katoen, \emph{Principles of Model Checking}.\hskip 1em plus
  0.5em minus 0.4em\relax MIT Press, 2008.

\bibitem{kuvcera2005controller}
A.~Ku{\v{c}}era and O.~Stra{\v{z}}ovsk{\`y}, ``On the controller synthesis for
  finite-state markov decision processes,'' in \emph{International Conference
  on Foundations of Software Technology and Theoretical Computer
  Science}.\hskip 1em plus 0.5em minus 0.4em\relax Springer, 2005, pp.
  541--552.

\bibitem{baier2004controller}
C.~Baier, M.~Gr{\"o}{\ss}er, M.~Leucker, B.~Bollig, and F.~Ciesinski,
  ``Controller synthesis for probabilistic systems,'' in \emph{Exploring New
  Frontiers of Theoretical Informatics}.\hskip 1em plus 0.5em minus 0.4em\relax
  Springer, 2004, pp. 493--506.

\bibitem{lahijanian2010motion}
M.~Lahijanian, J.~Wasniewski, S.~B. Andersson, and C.~Belta, ``Motion planning
  and control from temporal logic specifications with probabilistic
  satisfaction guarantees,'' in \emph{Robotics and Automation (ICRA), 2010 IEEE
  International Conference on}.\hskip 1em plus 0.5em minus 0.4em\relax IEEE,
  2010, pp. 3227--3232.

\bibitem{hansson1994logic}
H.~Hansson and B.~Jonsson, ``A logic for reasoning about time and
  reliability,'' \emph{Formal aspects of computing}, vol.~6, no.~5, pp.
  512--535, 1994.

\bibitem{bernstein2002complexity}
D.~S. Bernstein, R.~Givan, N.~Immerman, and S.~Zilberstein, ``The complexity of
  decentralized control of markov decision processes,'' \emph{Mathematics of
  operations research}, vol.~27, no.~4, pp. 819--840, 2002.

\bibitem{oliehoek2008optimal}
F.~A. Oliehoek, M.~T. Spaan, and N.~Vlassis, ``Optimal and approximate q-value
  functions for decentralized pomdps,'' \emph{Journal of Artificial
  Intelligence Research}, vol.~32, pp. 289--353, 2008.

\bibitem{filar2012competitive}
J.~Filar and K.~Vrieze, \emph{Competitive {Markov} decision processes}.\hskip
  1em plus 0.5em minus 0.4em\relax Springer Science \& Business Media, 2012.

\bibitem{forejt2011automated}
V.~Forejt, M.~Kwiatkowska, G.~Norman, and D.~Parker, ``Automated verification
  techniques for probabilistic systems,'' in \emph{Formal Methods for Eternal
  Networked Software Systems}.\hskip 1em plus 0.5em minus 0.4em\relax Springer,
  2011, pp. 53--113.

\bibitem{cassandras2008introduction}
C.~G. Cassandras and S.~Lafortune, \emph{Introduction to discrete event
  systems}.\hskip 1em plus 0.5em minus 0.4em\relax Springer Science \& Business
  Media, 2008.

\bibitem{ujma2015permissive}
M.~Ujma, D.~Parker, M.~Kwiatkowska, V.~Forejt, and K.~Drager, ``Permissive
  controller synthesis for probabilistic systems,'' \emph{Logical Methods in
  Computer Science}, vol.~11, 2015.

\bibitem{schrijver1998theory}
A.~Schrijver, \emph{Theory of linear and integer programming}.\hskip 1em plus
  0.5em minus 0.4em\relax John Wiley \& Sons, 1998.

\bibitem{angluin1987learning}
D.~Angluin, ``Learning regular sets from queries and counterexamples,''
  \emph{Information and computation}, vol.~75, no.~2, pp. 87--106, 1987.

\bibitem{hermanns2008probabilistic}
H.~Hermanns, B.~Wachter, and L.~Zhang, ``Probabilistic cegar,'' in
  \emph{International Conference on Computer Aided Verification}.\hskip 1em
  plus 0.5em minus 0.4em\relax Springer, 2008, pp. 162--175.

\bibitem{clarke2000counterexample}
E.~Clarke, O.~Grumberg, S.~Jha, Y.~Lu, and H.~Veith, ``Counterexample-guided
  abstraction refinement,'' in \emph{International Conference on Computer Aided
  Verification}.\hskip 1em plus 0.5em minus 0.4em\relax Springer, 2000, pp.
  154--169.

\bibitem{lin2011counterexample}
S.-W. Lin and P.-A. Hsiung, ``Counterexample-guided assume-guarantee synthesis
  through learning,'' \emph{Computers, IEEE Transactions on}, vol.~60, no.~5,
  pp. 734--750, 2011.

\bibitem{dai2014automatic}
J.~Dai and H.~Lin, ``Automatic synthesis of cooperative multi-agent systems,''
  in \emph{Decision and Control (CDC), 2014 IEEE 53rd Annual Conference
  on}.\hskip 1em plus 0.5em minus 0.4em\relax IEEE, 2014, pp. 6173--6178.

\bibitem{raman2015reactive}
V.~Raman, A.~Donz{\'e}, D.~Sadigh, R.~M. Murray, and S.~A. Seshia, ``Reactive
  synthesis from signal temporal logic specifications,'' in \emph{Proceedings
  of the 18th International Conference on Hybrid Systems: Computation and
  Control}.\hskip 1em plus 0.5em minus 0.4em\relax ACM, 2015, pp. 239--248.

\bibitem{komuravelli2012assume}
A.~Komuravelli, C.~S. P{\u{a}}s{\u{a}}reanu, and E.~M. Clarke,
  ``Assume-guarantee abstraction refinement for probabilistic systems,'' in
  \emph{Computer aided verification}.\hskip 1em plus 0.5em minus 0.4em\relax
  Springer, 2012, pp. 310--326.

\bibitem{he2016learning}
F.~He, X.~Gao, M.~Wang, B.-Y. Wang, and L.~Zhang, ``Learning weighted
  assumptions for compositional verification of markov decision processes,''
  \emph{ACM Transactions on Software Engineering and Methodology (TOSEM)},
  vol.~25, no.~3, p.~21, 2016.

\bibitem{wu2015counterexample}
B.~Wu and H.~Lin, ``Counterexample-guided permissive supervisor synthesis for
  probabilistic systems through learning,'' in \emph{American Control
  Conference (ACC), 2015}.\hskip 1em plus 0.5em minus 0.4em\relax IEEE, 2015,
  pp. 2894--2899.

\bibitem{wu2016counterexample}
------, ``Counterexample-guided distributed permissive supervisor synthesis for
  probabilistic multi-agent systems through learning,'' in \emph{American
  Control Conference (ACC), 2016}.\hskip 1em plus 0.5em minus 0.4em\relax IEEE,
  2016, pp. 5519--5524.

\bibitem{jansen2012comics}
N.~Jansen, E.~{\'A}brah{\'a}m, M.~Volk, R.~Wimmer, J.-P. Katoen, and B.~Becker,
  ``The comics tool--computing minimal counterexamples for dtmcs,'' in
  \emph{International Symposium on Automated Technology for Verification and
  Analysis}.\hskip 1em plus 0.5em minus 0.4em\relax Springer, 2012, pp.
  349--353.

\bibitem{bollig2010libalf}
B.~Bollig, J.-P. Katoen, C.~Kern, M.~Leucker, D.~Neider, and D.~R. Piegdon,
  ``libalf: The automata learning framework,'' in \emph{Computer Aided
  Verification}.\hskip 1em plus 0.5em minus 0.4em\relax Springer, 2010, pp.
  360--364.

\bibitem{chadha2010counterexample}
R.~Chadha and M.~Viswanathan, ``A counterexample-guided abstraction-refinement
  framework for {Markov} decision processes,'' \emph{ACM Transactions on
  Computational Logic (TOCL)}, vol.~12, no.~1, p.~1, 2010.

\bibitem{clarke2003counterexample}
E.~Clarke, O.~Grumberg, S.~Jha, Y.~Lu, and H.~Veith, ``Counterexample-guided
  abstraction refinement for symbolic model checking,'' \emph{Journal of the
  ACM (JACM)}, vol.~50, no.~5, pp. 752--794, 2003.

\bibitem{abraham2014counterexample}
E.~{\'A}brah{\'a}m, B.~Becker, C.~Dehnert, N.~Jansen, J.-P. Katoen, and
  R.~Wimmer, ``Counterexample generation for discrete-time {Markov} models: An
  introductory survey,'' in \emph{Formal Methods for Executable Software
  Models}.\hskip 1em plus 0.5em minus 0.4em\relax Springer, 2014, pp. 65--121.

\bibitem{wimmer2012minimal}
R.~Wimmer, N.~Jansen, E.~{\'A}brah{\'a}m, B.~Becker, and J.-P. Katoen,
  ``Minimal critical subsystems for discrete-time markov models,'' in
  \emph{International Conference on Tools and Algorithms for the Construction
  and Analysis of Systems}.\hskip 1em plus 0.5em minus 0.4em\relax Springer,
  2012, pp. 299--314.

\bibitem{han2009counterexample}
T.~Han, J.~Katoen, and D.~Berteun, ``Counterexample generation in probabilistic
  model checking,'' \emph{Software Engineering, IEEE Transactions on}, vol.~35,
  no.~2, pp. 241--257, 2009.

\bibitem{henzinger1998you}
T.~A. Henzinger, S.~Qadeer, and S.~K. Rajamani, ``You assume, we guarantee:
  Methodology and case studies,'' in \emph{Computer Aided Verification}.\hskip
  1em plus 0.5em minus 0.4em\relax Springer, 1998, pp. 440--451.

\bibitem{henzinger2004abstractions}
T.~A. Henzinger, R.~Jhala, R.~Majumdar, and K.~L. McMillan, ``Abstractions from
  proofs,'' in \emph{ACM SIGPLAN Notices}, vol.~39, no.~1.\hskip 1em plus 0.5em
  minus 0.4em\relax ACM, 2004, pp. 232--244.

\bibitem{segala1995probabilistic}
R.~Segala and N.~Lynch, ``Probabilistic simulations for probabilistic
  processes,'' \emph{Nordic Journal of Computing}, vol.~2, no.~2, pp. 250--273,
  1995.

\end{thebibliography}
\end{document}